\documentclass[conference, letterpaper]{IEEEtran}
\usepackage[top=0.7in, bottom=1.0in, left=0.65in, right=0.65in]{geometry}

\usepackage[short]{optidef}

\usepackage{pgfplots}
\usepackage{tikz}
\usetikzlibrary{calc}
 
\newcommand{\gettikzxy}[3]{%
  \tikz@scan@one@point\pgfutil@firstofone#1\relax
  \edef#2{\the\pgf@x}%
  \edef#3{\the\pgf@y}%
}

\usepackage{setspace}
\usepackage{lettrine}

\usepackage{graphics,
	psfrag,
	epsfig,
	amsthm,
	cite,
	amssymb,
	url,
	dsfont,
	subfigure,
	balance,
	enumerate,
	color,
	setspace
}

\usepackage{multirow}
\usepackage{algorithm} 
\usepackage{algorithmicx}
\usepackage{algpseudocode}

\usepackage{amsmath}

\usepackage{epstopdf}

\newcommand{\ignore}[1]{}

\newtheorem{proposition}{Proposition}

\usepackage{comment}
\usepackage{derivative}

\newcommand{\eref}[1]{(\ref{#1})}

\newcommand{\cref}[1]{Constraint~\ref{#1}}

\IEEEoverridecommandlockouts
\usepackage{smartdiagram}

\usetikzlibrary{spy,backgrounds}
\usetikzlibrary{shapes, arrows, positioning}

\usepackage{mathtools}
\usepackage{color}
\usepackage[utf8]{inputenc}

\begin{document}

\title{
 Maximizing Connectivity of Uplink RIS-Assisted UAV Networks 
}

\author{\IEEEauthorblockN{Mohammed Saif\textsuperscript{1} and Shahrokh Valaee\textsuperscript{2}}
\IEEEauthorblockA{\textsuperscript{1}Electrical, Computer, and Biomedical Engineering, Toronto Metropolitan University, Toronto, ON, Canada \\
\textsuperscript{2}Electrical and Computer Engineering, University of Toronto, Toronto, ON, Canada\\
Email: mohammed.saif@torontomu.ca, valaee@ece.utoronto.ca}
}

\IEEEoverridecommandlockouts
\maketitle

\IEEEpubidadjcol

\begin{abstract}
In this paper, we present a new approach for unmanned aerial vehicle (UAV) positioning and reconfigurable intelligent surface (RIS) partitioning to enhance connectivity of uplink RIS-assisted UAV networks. To achieve this, our approach optimizes RIS-aided link selection, RIS partitioning, and UAV positions to maximize network connectivity characterized by its Fiedler value. Meanwhile, it maintains a specific signal-to-interference plus noise ratio (SINR) constraint for user equipment (UE), which is influenced by RIS partitioning and UAV reliability. The network connectivity optimization problem is formulated using the Fiedler value subject to RIS elements allocation and SINR constraints. This problem is a computationally expensive combinatorial optimization, necessitating an efficient iterative approach. In particular, we propose a perturbation method for RIS-aided link selection, and  derive a closed-form solution for RIS partitioning, with each partition tailored to optimize SINR for individual UAV.  For the given RIS-aided links and RIS partitioning, we then show that the problem of UAV positioning can be formulated as a low complexity semi-definite programming (SDP) optimization problem, which can be solved using off-the-shelf CVX solvers. Our simulations show the potential gain of UAV positioning and RIS partitioning compared to the benchmark schemes from the literature.

\end{abstract}

\begin{IEEEkeywords}
Network  connectivity, RIS-assisted UAV communications, RIS partitioning, SDP optimization.
\end{IEEEkeywords}

\section{Introduction}
Optimizing the locations of unmanned aerial vehicles (UAVs) is essential for a wide range of applications, including extending network coverage, enhancing connectivity and resiliency, and improving localization and tracking \cite{9716042, 8292633}. Additionally, the flexible deployment of UAVs allows the establishment of line-of-sight (LoS) communications to geographically distant user equipments (UEs), enabling a large number of UEs to connect to the network and reducing the probability of unconnected UEs. Since UAVs are prone to failure due to limited energy, deploying additional UAVs is not always preferable. 

Reconfigurable intelligent surfaces (RISs) have emerged as a low-cost solution for controlling the wireless
environment, with significant potential for improving network connectivity and extending service coverage. By manipulating the propagation environment, RISs can significantly enhance energy efficiency \cite{10437618}, coverage \cite{9756313}, and network connectivity \cite{10104574}. From a connectivity perspective, RISs can introduce additional links to networks and significantly enhance connectivity.
RISs can also mitigate UAV failures—measured by the removal of UAV nodes from the network—by redirecting UE signals to reliable UAVs, thereby maintaining network integrity.  

Most studies on maximizing network connectivity have primarily focused on utilizing UAVs \cite{8292633}, relays \cite{4786516}, or sensors \cite{4657335} in different systems.  The authors of \cite{8292633} maximize network connectivity, modeled by the Fiedler value (i.e., the second smallest eigenvalue of the Laplacian graph network \cite{new}), by optimizing UAV positioning in small-cell systems.  The works \cite{4786516, 4657335} consider optimizing the placement of relays and sensors for connectivity maximization and network repair maintenance. On the other hand, the work \cite{10104574} improves network connectivity and resiliency of cell-free networks using RIS deployment. 
The convergence of RIS and UAV technologies has been leveraged for enhancing key performance metrics, including physical layer security \cite{10507188} and connectivity \cite{saifglobecom_E}. Integrating RIS technology with UAVs introduces numerous optimization challenges, such as UAV positioning, UE-RIS-UAV link selection, and RIS phase design \cite{9756313}. 

The work  \cite{saifglobecom_E} improves connectivity of RIS-assisted UAV networks using RIS deployment, with the optimization of  UE-RIS-UAV link selection and RIS phase design only. Specifically, \cite{saifglobecom_E} proposes to generate one UE-RIS-UAV link from each RIS, ignoring the efficient use of RIS that could be utilized to generate multiple links in the network. In addition, they do not consider optimizing UAV location to further improve network connectivity and coverage. Recent advancements in RIS optimization have introduced a technique known as RIS virtual partitioning, which significantly enhances communication performance \cite{10458024}. RIS virtual partitioning divides a single RIS into multiple virtual sections, each configured with distinct phase shifts optimized for specific directions. This innovative approach enables the RIS to reflect an incoming signal into multiple cascaded signals, each directed toward different targets, thereby enhancing spatial efficiency and signal coverage \cite{10458024}. Unlike \cite{10507188}, which investigates RIS virtual partitioning and UAV communications for physical-layer security, this paper focuses on leveraging these techniques to enhance network connectivity. In particular, the RIS is utilized to establish multiple communication links—each originating from a virtual section—thereby improving connectivity.


The main findings of this
paper can be summarized as follows. 
\begin{itemize}
    \item The network connectivity optimization problem is formulated using the Fiedler value subject to RIS elements allocation and signal-to-interference plus noise ratio (SINR) constraints. 

    \item To tackle this problem, we develop an iterative optimization framework. For given UAV positions and RIS partitioning, we first propose a connectivity-aware perturbation method to optimize the RIS-assisted link selection. We derive a closed-form solution for RIS partitioning. Then, for the resulting link selection and RIS partitioning, we reformulate the UAV positioning subproblem as a low-complexity semidefinite programming (SDP) problem, which can be efficiently solved using standard CVX solvers. The proposed procedure iteratively updates these variables until convergence.

    \item Our simulations show the potential gain of the proposed iterative solution compared to the benchmark schemes from the literature.
\end{itemize}


\section{System Model}\label{S}
 
Consider an uplink-RIS-assisted UAV model, where a set of single-antenna UEs, denoted by $\mathcal M=\{1, 2, \ldots, M\}$, transmit signals to the base station (BS) via a set of single-antenna UAVs, denoted by $\mathcal K=\{1, 2, \ldots, K\}$, with the aid of RIS comprising of $N$ passive elements. 
Through UAV positioning optimization and RIS deployment, service coverage area can be extended to cover the $M$ UEs, which are in deep fade situation. 





The RIS can reflect the signal of the transmitted UE to multiple UAVs through virtual partitioning representation; however, the RIS cannot be assigned to more than one UE at the same time \cite{10507188}.  The set of multiple UAVs that exploit the RIS  is denoted as $\mathcal K_\text{RIS}$, which also represents the number of partitions in the RIS, i.e., $\mathcal K_\text{RIS}=\{\text{UAV}_1, \ldots, \text{UAV}_{K_\text{RIS}}\}$.  We represent the RIS elements’ allocation portions $\boldsymbol{\alpha}=[\alpha_1, \ldots,  \alpha_{K_\text{RIS}}]$ to denote the RIS portions that are allocated to UAV$_{k}$, $k \in \{1, \ldots, K_\text{RIS}\}$, respectively. The number of RIS elements designated for UAV$_k$ is $N_k$, which  can be presented as $N_k=\lceil\alpha_kN\rceil$. UAV$_1$ obtains a coherently aligned link from the designated RIS portion $\alpha_1$ and a non-coherently aligned links from the remaining RIS portions $\{\alpha_2, \ldots, \alpha_{K_\text{RIS}} \}$.


 

All channels of UE-UAV, UE-RIS, and RIS-UAV   are considered to be quasi-static with flat-fading and follow the Nakagami-$f$ fading model \cite{10507188}. Let  $(\mathbf g_m^\text{MR}, \beta_m^\text{MR})$ and $(\mathbf g^\text{RK}_k ,\beta^\text{RK}_{k})$ represent the small-scale fading coefficients and path-losses for the
$\text{UE}_m \rightarrow \text{RIS}$ and $\text{RIS} \rightarrow \text{UAV}_{k}$ channels, respectively. Here, $\mathbf g_m^\text{MR}= [g^\text{MR}_{1,m}, \ldots, g^\text{MR}_{N,m}]$ and $\mathbf g^\text{RK}_{k}= [g^\text{RK}_{1,k}, \ldots, g^\text{RK}_{N,k}]$. Therefore, we consider the channel from UE$_m$ to UAV$_k$ over the $n$-th RIS element as $g^\text{MRK}_{m,n,k}= g^\text{MR}_{n,m} g^\text{RK}_{n,k}$, where $g^\text{MR}_{n,m}=|g^\text{MR}_{n,m}|e^{-j\phi_{n,m}}$ denotes the channel coefficient between UE$_m$ and the $n$-th RIS element, while $g^\text{RK}_{n,k}=|g^\text{RK}_{n,k}|e^{-j\psi_{n,k}}$ is the channel coefficient
between the $n$-th RIS element and UAV$_k$; $|g^\text{MR}_{n,m}|$ and $|g^\text{RK}_{n,k}|$ are the channel amplitudes, while  $\phi_{n,m}$ and  $\psi_{n,k}$ are the channel phases. Moreover, for the direct links between the UEs and the UAVs, let $g^\text{MK}_{m,k}$ and $\beta^\text{MK}_{m,k}$ denote the small-scale fading coefficient and path-loss for the $\text{UE}_m \rightarrow \text{UAV}_k$ channel, respectively.

\begin{table*}
	\begin{align}\label{exact} 
 \text{SINR}_{k}(\boldsymbol{\alpha})&= \frac{p \bigg| \underbrace{\sqrt{\beta_{m,k}^\text{UK}}g^\text{MK}_{m,k}}_{\textbf{direct link}}+  \underbrace{\sqrt{\beta_m^\text{MR}\beta^\text{RK}_{k}}\sum_{n=1}^{N_k}g^\text{MRK}_{m,n,k}e^{j\theta_{n}}}_\textbf{aligned phase}+\underbrace{\sum_{k'=1, k'\neq k}^K\sqrt{\beta_m^\text{MR}\beta^\text{RK}_{m,k'}}\sum_{n'=1}^{N_{k'}}g^\text{MRK}_{m,n,k'}e^{j\theta_{n'}}}_\textbf{non-aligned phase}\bigg|^2}{\underbrace{\sum_{m'=1, m' \neq m}^M \sqrt{\beta_{m',k}^\text{MK}}g^\text{MK}_{m',k}}_{\textbf{direct links of other UEs}}+N_0}  \\ &  \nonumber \overset{(a)}=
\frac{p \bigg| \sqrt{\beta_{m,k}^\text{MK}}g^\text{MK}_{m,k}+  \sqrt{\beta_m^\text{MR}\beta^\text{RK}_{k}}\alpha_k \sum_{n=1}^{N}|g^\text{MR}_n||g^\text{RK}_{n,k}| \bigg|^2}{\sum_{m'=1, m' \neq m}^M \sqrt{\beta_{m',k}^\text{MK}}g^\text{MK}_{m',k}+N_0} \overset{(b)}=    
\frac{p \bigg| \sqrt{\beta_{m,k}^\text{MK}}g^\text{MK}_{m,k}+  \sqrt{\beta_m^\text{MR}\beta^\text{RK}_{k}}\alpha_k Q\bigg|^2}{\sum_{m'=1, m' \neq m}^M \sqrt{\beta_{m',k}^\text{MK}}g^\text{MK}_{m',k}+N_0}.  
\end{align}
\hrulefill
	\vspace*{-0.5cm}
\end{table*}

The SINR at UAV$_k$ can be expressed as in \eref{exact} given at the top of the next page, where $N_0$ is the additive white Gaussian noise (AWGN) power density, $\theta_n$ is the phase shift of element $n$, $p$ is the transmit power of an UE, $\overset{(a)}=$ comes from (i) $\sum_{n=1}^{N_k}(.)=\sum_{n=1}^{\lceil\alpha_kN\rceil}(.)= \alpha_k \sum_{n=1}^{N}(.)$ and (ii) $g^\text{MRK}_{n,k}e^{j\theta_{n}}=|g^\text{MR}_n||g^\text{RK}_{n,k}|e^{j(\theta_{n}-\phi_n-\psi_{n,k})}=|g^\text{MR}_n||g^\text{RK}_{n,k}|$; i.e., for the $n$-th RIS element, given $\phi_n$ and $\psi_{n,k}$, the RIS's controller can perfectly align $\theta_n$ with $\phi_n$ and $\psi_{n,j}$ to nullify their effect, and $\overset{(b)}=$ comes from $Q=\sum_{n=1}^N|g^\text{MR}_n||g^\text{RK}_{n,k}|$ which is the $N$-element double-Nakagami$-f$ that is independent and
identically distributed (i.i.d.) random variable (RV) with parameters $f_1$, $f_2$, $\Omega_1$, and $\Omega_2$; the distribution of the product
of two RVs following the Nakagami$-f$ distribution with the probability density function (PDF) is given in \cite{10507188}. To ease the analysis of the RIS partitioning optimization, the non-aligned phase term is ignored in \eref{exact}. Nonetheless, our numerical results show that the impact of this term is minimal.

\section{Problem Modeling}\label{PF}

\subsection{Graph Construction and Node Reliability} 
We model the uplink RIS-assisted UAV network using an undirected, weighted, simple finite graph $\mathcal G(\mathcal V, \mathcal E)$, where $\mathcal V=\{v_1, \ldots, v_{V} \}$ is the set of vertices (UE and UAV nodes) and $\mathcal E=\{e_1, \ldots, e_E \}$ is the set of $E$ edges connecting the nodes, i.e., links connecting  the UE to UAVs and links among the UAVs. For any
edge $l$ between two vertices $v_i$ and $v_j \in \mathcal V$, the edge
vector $\mathbf a_l \in \mathbf R^{V}$ is  a vector of zeros except for its $i$-th and
$j$-th elements, where $a_{l,i}=1$  and $a_{l,j}=-1$, respectively. The graph's incidence matrix of $\mathbf A \in \mathbf R^{V\times E}$ is constructed as $\mathbf A = [\mathbf a_l, \ldots, \mathbf a_E]$. The Laplacian matrix of the graph, $\mathbf L \in R^{V\times V}$, is then given as follows \cite{8292633}
\begin{equation} \label{lap}
\mathbf L= \mathbf A  ~diag(\mathbf w) ~\mathbf A^T=\sum^{E}_{l=1} w_l \mathbf a_l \mathbf a^T_l,
\end{equation}
where $\mathbf w \in {[\mathbb R^+]}^E$ denotes the $E \times 1$ weighting vector coefficients
for the $E$ edges and is given by   $\mathbf w= [ w_1, w_2, \ldots, w_E]$. The weights of the edges are based on their corresponding SNRs that are modeled as in \cite{8292633}, while the connections of the edges are based on SNR thresholds, i.e.,  $\gamma^\text{UE}_{0}$ and $\gamma^\text{UAV}_{0}$ for $\text{UE} \rightarrow  \text{UAV}_{k}$ and $\text{UAV}_k \rightarrow  \text{UAV}_{k'}$, respectively. The Laplacian matrix $\mathbf L$ is a positive semi-definite matrix, i.e., $\mathbf L \geq 1$. Its second smallest eigenvalue, denoted by $\lambda_2 (\mathbf L)$, is known as the Fiedler value. A higher Fiedler value indicates stronger overall network connectivity, while $\lambda_2 (\mathbf L)=0$ implies that the graph is disconnected \cite{new}.

The deployment of UAV positioning and RIS partitioning  can create more links, thereby a new graph $\mathcal G'(\mathcal V, \mathcal E')$ is built with the same number of $V$ nodes and a larger set of edges denoted by $\mathcal E'$ with $\mathcal E'=\mathcal E \cup \mathcal E_{new}$, where $\mathcal E_{new}$ is the new edges for the  $\text{UE} \rightarrow  \text{UAV}_{k\in \mathcal K_\text{RIS}}$  links. In particular, such deployment has an impact on relaying information from UEs to UAVs and extends the coverage, creating additional $E'-E$ edges to the original network. By comparing the network graphs before and after the deployment, the connectivity improvement can be quantified by evaluating the change in the Fiedler value. Specifically, the gain is observed when
$\lambda_2 (\mathbf L') \geq \lambda_2 (\mathbf L)$, where $\mathbf L'$ represents the Laplacian matrix of the updated graph $\mathcal G'(\mathcal V, \mathcal E')$.

Each UAV has a different impact on network connectivity when it is removed from the graph along with its connected edges to other nodes. Similar to \cite{saifglobecom_E}, we measure the reliability of UAV$_{k \in \mathcal{K}}$ based on their impact on network connectivity, defined as $\mathcal{R}_k = \lambda_2(\mathcal{G}_{-k})$, where $\mathcal{G}_{-k}$ is the subgraph resulting from removing UAV$_k$ and all its adjacent edges to other nodes in $\mathcal{G}$.

\subsection{Problem Formulation}

Let $\mathbf X$ be the binary UE-RIS  assignment matrix with entries $x_{m}$, where  $x_{m}$ is $1$ if UE$_m$ is connected to the RIS, and  $x_{u}=0$ otherwise. Likewise, let $\mathbf Z$ be the  RIS-UAV assignment matrix and comprise binary elements $z_k$  defined as $1$ if  $\text{UAV}_k$ is connected to the RIS, and $0$ otherwise. Let $\gamma_\text{th}$ be the SINR threshold. Mathematically, the problem can be written as follows
\begin{align}\nonumber
\label{P0}
&\max_{\boldsymbol{0} \preceq \boldsymbol{\alpha} \preceq \boldsymbol{1},\boldsymbol{u}, \mathbf X, \mathbf Z} ~~~~ \lambda_2(\mathbf L' (\boldsymbol{\alpha}, \boldsymbol{u}, \mathbf X, \mathbf Z))
 \\ 
 &{\rm s.~t.\ } ~~~~~\sum_{m=1}^Mx_m\leq 1, \\ \nonumber &
~~~~~~~~~~~ 1 \leq \sum_{k =1}^K z_k \leq K_\text{RIS}, \\ \nonumber &
~~~~~~~~~~~\text{SINR}_k \geq \mathcal R_k\gamma_\text{th},~~ \forall k \in \{1, \ldots,  K_\text{RIS}\},\\ \nonumber 
& ~~~~~~~~~~~\sum_{k=1}^{K_\text{RIS}}\alpha_k\leq 1,  \\ \nonumber 
& 
~~~~~~~~~~~x_m, z_k \in \{0,1 \}, \nonumber 
\end{align} 
where $\boldsymbol{u}=[\boldsymbol{u}_1, \ldots, \boldsymbol{u}_{K_\text{RIS}}]$, where $\boldsymbol{u}_1$  is the $3 \times 1$ position vector in a Cartesian
3D coordinate system of UAV$_1$, and $\preceq$ is the pairwise inequality. The first constraint ensures a single UE is connected to the RIS, while the second constraint implies that at most one reflected link should be created for UAV$_k$  via the RIS. The third constraint defines the SINR requirement for UAV$_{k}$, while the second constraint ensures that the allocated portions do not exceed one, so that the total number of assigned RIS elements does not surpass the available RIS elements. The problem in \eref{P0} is a computationally expensive combinatorial optimization.

\section{Proposed Solution}\label{PS}
This section tackles \eref{P0} by decomposing it into three subproblems and solving them  iteratively.  

\subsection{UE-RIS-UAV Link Selection}
We propose a low-complexity yet effective heuristic algorithm for jointly optimizing $\mathbf{X}$ and $\mathbf{Z}$ based on the Fiedler vector of the original network graph. The key idea is to iteratively establish the $K_{\mathrm{RIS}}$ RIS-assisted links that provide the largest improvement in network connectivity. Instead of repeatedly recomputing the algebraic connectivity, the proposed method exploits the entries of the Fiedler vector and evaluates the weighted differences between connected node pairs derived from the graph Laplacian $\mathbf{L}$. This significantly reduces the computational complexity while effectively identifying the most beneficial RIS-assisted links. In the following proposition, we show the upper bound for $\lambda_2(\mathbf L')$, and describe the proposed perturbation heuristic.

\begin{proposition}\label{prep1}
$\lambda_2(\mathbf L')$ is upper bounded by $ \lambda_2(\mathbf L)+ \mathcal R_k(v_m-v_k)^2$, where  $v_m$ and $v_k$ are the corresponding values of UE$_m$ and UAV$_k$ indices of the Fiedler vector $\mathbf v$ of $\lambda_2 (\mathbf L)$. 
\end{proposition}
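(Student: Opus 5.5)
The plan is to view the addition of the RIS-assisted link $\text{UE}_m \rightarrow \text{RIS} \rightarrow \text{UAV}_k$ as a rank-one perturbation of the Laplacian, and then bound the new Fiedler value with the Courant--Fischer (Rayleigh quotient) characterization. Throughout I take the weight of the new edge to be $\mathcal R_k$. This is the modelling choice implicit in the statement, since the SINR constraint of the link is scaled by the reliability $\mathcal R_k$. Let $\mathbf a \in \mathbf R^{V}$ be the edge vector of the new edge, with $a_m=1$, $a_k=-1$ and zeros elsewhere. By \eref{lap},
\begin{equation*}
\mathbf L' = \mathbf L + \mathcal R_k\, \mathbf a \mathbf a^T .
\end{equation*}
If several links are added, the same argument applies edge by edge; for the proposition a single added link suffices.

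The first step is to record that $\mathbf L'$ is still a graph Laplacian, so it is positive semi-definite and $\mathbf L'\mathbf 1 = \mathbf 0$, because $\mathbf a^T\mathbf 1=0$. Consequently $\lambda_1(\mathbf L')=0$ with eigenvector $\mathbf 1$, and Courant--Fischer gives
\begin{equation*}
\lambda_2(\mathbf L') = \min_{\mathbf x \perp \mathbf 1,\ \|\mathbf x\|=1} \mathbf x^T \mathbf L' \mathbf x .
\end{equation*}

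The second step is to choose the test vector $\mathbf x=\mathbf v$, the unit-norm Fiedler vector of $\mathbf L$. Since $\mathbf v$ is an eigenvector of the symmetric matrix $\mathbf L$ for an eigenvalue distinct from the one associated with $\mathbf 1$ (or can be chosen orthogonal to $\mathbf 1$ if $\lambda_2(\mathbf L)=0$), it satisfies $\mathbf v\perp\mathbf 1$ and is therefore feasible. Expanding the quadratic form gives
\begin{equation*}
\mathbf v^T\mathbf L'\mathbf v = \mathbf v^T\mathbf L\mathbf v + \mathcal R_k (\mathbf a^T\mathbf v)^2 = \lambda_2(\mathbf L) + \mathcal R_k (v_m - v_k)^2 ,
\end{equation*}
which is exactly the claimed upper bound.

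The only delicate point is the feasibility of $\mathbf v$, in the degenerate case where $\mathbf L$ is disconnected and $0$ has multiplicity greater than one. There I would explicitly pick $\mathbf v$ in the null space orthogonal to $\mathbf 1$, and the computation is unchanged. I would also remark that the bound is tight to first order in the perturbation theory sense, since $\partial \lambda_2/\partial w = (v_m-v_k)^2$ for a simple $\lambda_2$. This justifies using $\mathcal R_k(v_m-v_k)^2$ as the ranking score in the link-selection heuristic.
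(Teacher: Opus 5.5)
Your proof is correct. Writing $\mathbf L'=\mathbf L+\mathcal R_k\mathbf a\mathbf a^T$ and using the unit-norm Fiedler vector $\mathbf v\perp\mathbf 1$ of $\mathbf L$ as a test vector in the Courant--Fischer characterization gives $\lambda_2(\mathbf L')\le\mathbf v^T\mathbf L'\mathbf v=\lambda_2(\mathbf L)+\mathcal R_k(v_m-v_k)^2$, and you handle the degenerate case $\lambda_2(\mathbf L)=0$ correctly. The paper states Proposition~\ref{prep1} without any proof. This standard Rayleigh-quotient perturbation bound is evidently what it relies on, so your argument supplies the missing justification rather than departing from one.

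Two points deserve to be stated explicitly. First, the computation actually yields $w(v_m-v_k)^2$ for whatever weight $w\ge 0$ the new edge carries. The bound as written therefore needs $w\le\mathcal R_k$, and your identification $w=\mathcal R_k$ is an assumption the statement requires, not something that can be derived (the paper elsewhere says edge weights are SNR-based). Second, if several RIS links are added at once, the same test vector gives $\lambda_2(\mathbf L)+\sum_k\mathcal R_k(v_m-v_k)^2$. The single-term bound therefore applies per greedy step with $\mathbf v$ recomputed, which is how the heuristic uses it.
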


Let $W_{mk}=(v_m-v_k)^2$. 
Based on Proposition~\ref{prep1}, the proposed perturbation heuristic aims to maximize the increase in the algebraic connectivity by selecting RIS-assisted links with the largest weighted Fiedler distance. Starting from the original graph \(\mathcal{G}\) with Laplacian matrix \(\mathbf{L}\), the proposed algorithm proceeds as follows:

\begin{itemize}
\item Compute the Fiedler vector \(\mathbf{v}\), i.e., the unit eigenvector associated with the second-smallest eigenvalue \(\lambda_2(\mathbf{L})\). For each feasible UE-RIS-UAV schedule, evaluate the gain metric $\mathcal{R}_k W_{mk}$.

\item Among all remaining feasible schedules, select the UE-RIS-UAV link that yields the largest value of \(\mathcal{R}_k W_{mk}\), and add the corresponding edge to the graph.

\item Update the graph topology and repeat the above procedure until \(K_{\mathrm{RIS}}\) RIS-assisted links have been established for the selected UE or no additional feasible links remain.

\end{itemize}


\subsection{RIS Partitioning Optimization}
For given  $\boldsymbol{u}$, $\mathbf X, \mathbf Z$, we  optimize $\boldsymbol{\alpha}$  to maximize
the sum SINR of the newly established $\text{UE} \xrightarrow{\text{RIS}} \text{UAV}_{k\in \mathcal K_\text{RIS}}$ links,   considering the
SINR constraint based on UAV reliability and RIS elements. Thus,  the subproblem for updating
RIS partitioning  is given by  
\begin{align}\label{RISp}
&\max_{\boldsymbol{0} \leq \boldsymbol{\alpha} \leq \boldsymbol{1}} ~~~~  \lambda_2(\mathbf L' (\boldsymbol{\alpha}))  
\\
 &{\rm s.~t.\ } ~~~~~~~\text{SINR}_k(\boldsymbol{\alpha})\geq \mathcal R_k\gamma_\text{th},~~ \forall k \in \{1, \ldots,  K_\text{RIS}\}, \nonumber \\&  ~~~~~~~~~~~~\sum_{k=1}^{K_\text{RIS}}\alpha_{k}\leq 1.\nonumber 
\end{align}

The proceeding proposition
provides the closed-form solution $\boldsymbol{\alpha}^*$ of \eref{RISp}.  
\begin{proposition}
The RIS partitioning that provides the maximized network connectivity via maximizing the sum SINR of the new $\text{UE} \xrightarrow{\text{RIS}} \text{UAV}_{k\in \mathcal K_\text{RIS}}$ links is given by
\begin{equation}\label{closed}
\alpha^*_{k}= \begin{cases}
\sqrt{\frac{\mathcal R_k \gamma_\text{th} (\Tilde{\gamma}+N_0)-p\beta^\text{MK}}{p\beta^\text{MR}\beta^\text{RK}_k N^2z}}, & \text{if}  ~  k \neq k^*,\\
1-\sum_{k=1}^{K_\text{RIS}-1}\alpha^{*}_{k}, & \text{if}  ~  k=k^*.
\end{cases}
\end{equation}
 \end{proposition}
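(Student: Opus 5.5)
The plan is to replace the Fiedler-value objective of \eref{RISp} by the sum of the SINRs of the new $\text{UE}\xrightarrow{\text{RIS}}\text{UAV}_k$ links. This is justified because $\mathbf X,\mathbf Z,\boldsymbol u$ are fixed, so the edge set $\mathcal E'$ is fixed. Each new edge weight is an increasing function of its SINR, and $\lambda_2(\mathbf L')$ is nondecreasing in every edge weight: $\mathbf L'=\sum_l w_l\mathbf a_l\mathbf a_l^T$ with $\mathbf a_l\mathbf a_l^T\succeq 0$, so $\lambda_2$ is monotone under PSD perturbations by Weyl's inequality. Proposition~\ref{prep1} additionally indicates that the gain from each link scales with its weight. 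We then work with the simplified SINR in \eref{exact}(b):
\begin{equation*}
\text{SINR}_k(\alpha_k)=\frac{p\,\big|\sqrt{\beta^\text{MK}}g^\text{MK}+\sqrt{\beta^\text{MR}\beta^\text{RK}_k}\,\alpha_k Q\big|^2}{\Tilde{\gamma}+N_0}.
\end{equation*}
Here $\Tilde{\gamma}$ denotes the interference term, and $Q$ is replaced by its scale $N\sqrt{z}$, with $z$ the (mean-square) per-element double-Nakagami factor. Expanding the square and neglecting the cross term (or treating the direct link as incoherent with the reflected one), the SINR becomes $\big(p\beta^\text{MK}+p\beta^\text{MR}\beta^\text{RK}_kN^2z\,\alpha_k^2\big)/(\Tilde{\gamma}+N_0)$. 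This expression depends only on its own $\alpha_k$, so the problem decouples across $k$.

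Next we observe that each $\text{SINR}_k$ is convex and increasing in $\alpha_k\ge 0$. Maximizing a sum of such convex increasing functions over the polytope $\{\boldsymbol\alpha\succeq 0,\ \sum_k\alpha_k\le 1,\ \text{SINR}_k\ge\mathcal R_k\gamma_\text{th}\}$ therefore yields an extreme-point solution. Each SINR constraint reduces to a lower bound $\alpha_k\ge\alpha_k^{\min}$, and solving the quadratic equality gives
\begin{equation*}
\alpha_k^{\min}=\sqrt{\frac{\mathcal R_k\gamma_\text{th}(\Tilde{\gamma}+N_0)-p\beta^\text{MK}}{p\beta^\text{MR}\beta^\text{RK}_kN^2z}}.
\end{equation*}
If the numerator is negative, the constraint is inactive and $\alpha_k^{\min}=0$. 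The sum constraint is active at the optimum since the objective increases in every coordinate. The vertices of the feasible region are then the points where all but one coordinate sit at $\alpha_k^{\min}$ and the remaining one, $k^*$, absorbs the residual $1-\sum_{k\ne k^*}\alpha_k^{\min}$. We choose $k^*$ as the index giving the largest objective. Because the function is convex, the marginal benefit of the surplus is largest for the link with the largest coefficient $\beta^\text{MR}\beta^\text{RK}_k$, possibly weighted by $\mathcal R_k$ when reliability enters. This yields \eref{closed} after relabeling so that $k^*=K_\text{RIS}$.

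Feasibility requires $\sum_k\alpha_k^{\min}\le 1$, and we state it as an assumption; otherwise \eref{RISp} is infeasible and the link selection must be revisited. The main obstacle is the first step. It needs an exact link between maximizing $\lambda_2(\mathbf L')$ and maximizing the sum SINR, which monotonicity only gives as a surrogate. It also relies on the approximation dropping the cross term and the random $Q$. I would present both as modeling approximations, consistent with the paper's dropping of the non-aligned term. The rigorous content is then the convex-maximization vertex argument with active minimum-SINR constraints.
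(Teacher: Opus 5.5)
Your argument is sound under the same approximations, but it reaches \eqref{closed} by a different route from the paper.

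The paper takes expectations of the two squared terms in the numerator of \eqref{exact} separately to get \eqref{appro1}. This implicitly drops the cross term, as you do, and effectively sets $Q\approx N\sqrt{z}$, where $z$ is the squared mean (not the mean square) of $|g^\text{MR}_n||g^\text{RK}_{n,k}|$. It then simply posits the allocation: all UAVs except $k^\ast=\arg\max_k\mathcal R_k$ sit exactly at their minimum SINR, and the most reliable UAV takes the residual. The paper gives no optimality argument for this structure, no justification of the sum-SINR surrogate, and no treatment of a negative radicand or of feasibility.

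Your vertex argument supplies exactly that missing justification. Each SINR constraint is a linear bound $\alpha_k\ge a_k$, so the feasible set is a simplex, and its nondominated vertices are $\boldsymbol a+r\mathbf e_k$ with $r=1-\sum_j a_j$.

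What your route changes is the identity of $k^\ast$. Write the approximate SINR as $c_k+d_k\alpha_k^2$ and use $d_ka_k^2=\mathcal R_k\gamma_\text{th}-c_k$. Then vertex $k$ improves on $\boldsymbol a$ by $r d_k(2a_k+r)=r\bigl(2\sqrt{d_k(\mathcal R_k\gamma_\text{th}-c_k)}+r d_k\bigr)$. So the sum-SINR-optimal index depends jointly on channel strength and reliability. It is not simply the UAV with the largest $\beta^\text{MR}\beta^\text{RK}_k$, so your closing heuristic is inaccurate: convexity alone does not give it. It also agrees with the paper's $\arg\max_k\mathcal R_k$ only in special cases, e.g.\ when $c_k$ and $d_k$ are equal across the selected UAVs.

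Your proof therefore establishes \eqref{closed} with $k^\ast$ defined by this vertex comparison. That also shows the paper's most-reliable-UAV choice is a design rule, not a consequence of maximizing the sum SINR.
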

\begin{proof}
Due to the complexity of obtaining $\boldsymbol{\alpha}$ directly from \eref{exact} and the first constraint of \eref{RISp}, which involves expansion and derivative, we first approximate the SINR expression in \eref{exact} by utilizing the expected values of $\sqrt{\beta_{m,k}^\text{MK}}|g^\text{MK}_{m,k}|, \sqrt{\beta_{m',k}^\text{MK}}|g^\text{MK}_{m',k}|$, and $\alpha_{k} Q$. Specifically, the expectations can be evaluated as $\text {E} \bigg\{ \bigg|\sqrt{\beta_{m,k}^\text{MK}}|g^\text{MK}_{m,k}| \bigg|^2\bigg\}=\beta_{m,k}^\text{MK}$, since $g^\text{MK}_{m,k} \sim \mathcal {CN}(0,1)$,  $\text {E}\big\{|g^\text{MK}_{m,k}|^2\big\}=1$; $\text {E} \bigg\{ \bigg|  \sqrt{\beta_m^\text{MR}\beta^\text{RK}_k}\alpha_{k} Q\bigg|^2\bigg\}= \beta_m^\text{MR}\beta^\text{RK}_k \text {E} \bigg\{ \bigg|  \alpha_{k} \sum_{n=1}^N|g^\text{MR}_{n,m}||g^\text{RK}_{m,n,k}|\bigg|^2\bigg\}= \beta_m^\text{MR}\beta^\text{RK}_k \alpha^{2}_{k} N^2 \frac{1}{\Tilde{f} \hat{f}} \frac{\Gamma(\Tilde{f}+0.5)^2}{\Gamma(\Tilde{f})^2} \frac{\Gamma(\hat{f}+0.5)^2}{\Gamma(\hat{f})^2}=\beta_m^\text{MR}\beta^\text{RK}_k \alpha^{2}_{k} N^2z$, where $z=\frac{1}{\Tilde{f} \hat{f}} \frac{\Gamma(\Tilde{f}+0.5)^2}{\Gamma(\Tilde{f})^2} \frac{\Gamma(\hat{f}+0.5)^2}{\Gamma(\hat{f})^2}$, where $\Gamma(.)$ denotes the Gamma function. By substituting these values in \eref{exact}, we have
\begin{align}\label{appro1}
  &\text{SINR}_k(\boldsymbol{\alpha})=  \frac{p[\beta_{m,k}^\text{MK}+\beta_m^\text{MR}\beta^\text{RK}_k \alpha^{2}_{k} N^2z]}{\sum_{m'=1, m'\neq m}^M p\beta_{m',k}^\text{MK}+N_0}.
\end{align}

To maximize the sum SINR of $\text{UE} \xrightarrow{\text{RIS}} \text{UAV}_{k \in \mathcal K_\text{RIS}}$ links while ensuring the constraints of \eref{RISp}, we assign portions of the RIS elements to the selected UAVs except for the UAV with the most reliability. Let $k^\ast$ be the most reliable UAV, i.e., $k^\ast=\arg\max_{k \in \mathcal K_\text{RIS}}\{\mathcal R_k\}$. Thus, the SINRs of these UAV$_{k\in \mathcal K_\text{RIS}\backslash k^*}$ are just equal to their minimum required SINR. Lastly, the remaining elements of the RIS  can be assigned to the UAV$_{k^\ast}$, providing a higher SINR for that UAV. Therefore, if we rewrite the first constraint in \eref{RISp} 
  using \eqref{appro1}, for all the UAVs except UAV$_{k^\ast}$, as $
\text{SINR}_k(\boldsymbol{\alpha}) = \frac{p[\beta_{m,k}^\text{MK}+\beta_m^\text{MR}\beta^\text{RK}_k \alpha^{2}_{k} N^2z]}{\sum_{m'=1, m'\neq m}^M p\beta_{m',k}^\text{MK}+N_0}= \mathcal R_k\gamma_\text{th}$, we get \eref{closed} with $\Tilde{\gamma}= \sum_{m'=1, m'\neq m}^M p\beta_{m',k}^\text{MK}$. In addition, for UAV$_{k^\ast}$, $\alpha^{*}_{k^*}= 1-\sum_{k=1}^{K_\text{RIS}-1}\alpha^{*}_{k}$, which completes the proof.
\end{proof}


\subsection{UAV Positioning Optimization}
For the given $\mathbf X, \mathbf Z, \boldsymbol{\alpha}^*$, the UAV positioning optimization is given by
\begin{align} \label{UAV}
&\max_{\boldsymbol{u}} ~~~~ \lambda_2(\mathbf L' (\boldsymbol{u}))\\
 & {\rm s.~t.\ }
 ~~~~~\text{SINR}_k\geq \mathcal R_k\gamma_\text{th},~~ \forall k \in \{1, \ldots,  K_\text{RIS}\}. \nonumber 
\end{align} 
As explained in Section II, the Laplacian matrix $\lambda_2(\mathbf L' (\boldsymbol{u}))$ depends on UAV placement which determines the new links. This makes \eref{UAV} a non-convex optimization problem, i.e.,  $\lambda_2(\mathbf L' (\boldsymbol{u}))$ is not a concave function. 
To address the non-concavity of $\lambda_2(\mathbf L' (\boldsymbol{u}))$,  we leverage the graph Laplacian matrix.

Let $\boldsymbol{v} \in \mathbf R^V$
be the eigenvector corresponding to $\lambda_2(\mathbf L )$, which has a unit norm $\|\boldsymbol{v}\|_2=1$ and is orthogonal to the all-ones vector $\mathbf 1$, i.e., $\boldsymbol{1}^T \boldsymbol{v}=0$, where $\boldsymbol{1}$ is $V \times 1$. Since $\mathbf L \boldsymbol{v}= \lambda_2 \boldsymbol{v}$ \cite{new},  $\boldsymbol{v}^T \mathbf L \boldsymbol{v}= \lambda_2 \boldsymbol{v}^T \boldsymbol{v}= \lambda_2$. Hence, $\lambda_2(\mathbf L )$ can be expressed as the smallest eigenvalue that satisfies these conditions \cite{8292633, 4786516}, i.e.,
\begin{align}\label{norm}
\lambda_2(\mathbf L )= \text{Inf}_{\boldsymbol{v}} \big\{\boldsymbol{v}^T \mathbf L \boldsymbol{v},~~ \|\boldsymbol{v}\|_2=1, ~~\boldsymbol{1}^T \boldsymbol{v}=0   \big\},
\end{align}
where Inf is the infimum of a set. \eref{norm} can be
expressed as a convex problem as 
\begin{align}
\label{convex}
&\lambda_2(\mathbf L ) = \min_{\boldsymbol{v} \in \mathbf R^V} ~~~~ \boldsymbol{v}^T \mathbf L \boldsymbol{v}
\\
 & {\rm s.~t.\ }
 ~~~~~\|\boldsymbol{v}\|_2=1, ~ \boldsymbol{1}^T \boldsymbol{v}=0. \nonumber 
\end{align}

Additionally, there is indirect dependence between $\mathbf L' (\boldsymbol{u})$ and $\boldsymbol{u}$. To address this,
we consider that
UAVs are distributed within a $h \times h \times h$ volume. Additionally, the search space along the axes
is uniformly quantized with a step size $\delta$, generating $J$ candidate search
grids for each UAV. This simplifies $\mathbf L' (\boldsymbol{u})$ to be represented by 
\begin{align}\label{L}
 \mathbf L'= \mathbf L + \sum^{K_\text{RIS}}_{k=1} \sum_{j=1}^{J} x_j^{(k)} \mathbf A_j^{(k)} \text{diag}(w_j^{(k)}) \mathbf A_j^{(k)},
\end{align}
where $x_j^{(k)}$ is equal to one if UAV$_k$ is located at the $j$-th grid
point, otherwise $x_j^{(k)}=0$. Moreover, $w_j^{(k)}$ and $\mathbf A_j^{(k)}$ are the
weighting coefficient vectors and the incidence matrix when  UAV$_k$ is deployed at the $j$-th grid point, respectively.

Collecting $x_j^{(k)}$, for $k \in \{1, \ldots, K_\text{RIS}\}$  and $j \in \{1, \ldots, J\}$, in the $JK_\text{RIS} \times 1$ vector $\boldsymbol{x}$, \eref{L} can be written as follows
\begin{align}\label{Lnew}\nonumber 
 \mathbf L' &=  \mathbf L + \sum_{i=1}^{JK_\text{RIS}} x_i \mathbf A_i \text{diag}(w_i) \mathbf A_i \\& 
= \mathbf L + (\boldsymbol{x} \otimes \boldsymbol{1}_V)\Delta,
\end{align}
where 
$\Delta= \big[ (\mathbf A_1 \text{diag}(w_1) \mathbf A_1)^T, (\mathbf A_2 \text{diag}(w_2) \mathbf A_2)^T, \\\ldots, (\mathbf A_{JK_\text{RIS}} \text{diag}(w_{JK_\text{RIS}}) \mathbf A_{JK_\text{RIS}})^T\big]^T$.
Note that since $J$ is much larger than $K_\text{RIS}$, considering the vector $J \times 1$ instead of $JK_\text{RIS} \times 1$ is convenient to provide superior performance. Thus, the problem can be viewed as selecting the optimal $K_\text{RIS}$ UAVs from a set of $J$ candidate UAVs (i.e., $J$ candidate grid points, with each UAV located at the center of its grid). Furthermore, by stacking the SINR levels between the transmitting UE, via the RIS, and UAV$_k$, which is placed at candidate positions in the search grid,  into a $J \times 1$ vector denoted by $\boldsymbol{y}_k$, such that $\text{SINR}_k=\boldsymbol{x}^T \boldsymbol{y}_k$, the optimization problem in \eref{UAV} can be written in terms  of $\boldsymbol{x}$
rather than $\boldsymbol{u}$. Hence, the optimization problem in \eref{UAV} can be written as follows
 \begin{align}\label{UAVnewnew}
&\max_{\boldsymbol{x}} ~~~~ \lambda_2(\mathbf L' (\boldsymbol{x}))\\
 & {\rm s.~t.\ }
 ~~~~~\boldsymbol{x}^T \boldsymbol{y}_k\geq \mathcal R_k \gamma_\text{th},~~ \forall k \in \{1, \ldots,  K_\text{RIS}\}, \nonumber \\ &  ~~~~~~~~~~~\boldsymbol{1}^T\boldsymbol{x}=K_\text{RIS}, ~ \boldsymbol{x} \in \{0,1 \}^{J}. \nonumber 
\end{align}
The combinatorial optimization problem in \eref{UAVnewnew} is intractable due to its high computational complexity. If the constraint
$\boldsymbol{x} \in \{0,1 \}^{J}$ is relaxed to $\boldsymbol{x} \in [0,1]^{J}$, we can
obtain a more tractable and convex optimization problem as follows  
 \begin{align}\label{cnovex1}
&\max_{\boldsymbol{x}} ~~~~ \lambda_2(\mathbf L' (\boldsymbol{x}))\\
 & {\rm s.~t.\ }
 ~~~~~\boldsymbol{x}^T \boldsymbol{y}_k\geq \mathcal R_k \gamma_\text{th},~~ \forall k \in \{1, \ldots,  K_\text{RIS}\}, \nonumber \\ &  ~~~~~~~~~~~\boldsymbol{1}^T\boldsymbol{x}=K_\text{RIS}, ~ 0 \leq \boldsymbol{x} \leq 1. \nonumber 
\end{align}
\begin{proposition}
\eref{cnovex1} is a convex optimization problem and  can be reformulated as an equivalent semi-definite programming (SDP) optimization problem.
 \end{proposition}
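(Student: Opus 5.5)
The plan is to establish concavity of the objective in \eref{cnovex1} via the variational characterization \eref{norm}, and then to turn the maximization of $\lambda_2$ into a linear matrix inequality by the standard shift-by-projector trick used in \cite{8292633, 4786516}.

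\emph{Step 1 (affinity in $\boldsymbol{x}$).} From \eref{Lnew}, $\mathbf L'(\boldsymbol{x}) = \mathbf L + \sum_{i} x_i \mathbf L_i$, where $\mathbf L_i = \mathbf A_i \text{diag}(w_i)\mathbf A_i^T$. Each $\mathbf L_i$ is a symmetric PSD Laplacian with $\mathbf L_i\boldsymbol{1}=\boldsymbol{0}$. Hence $\mathbf L'(\boldsymbol{x})$ is affine in $\boldsymbol{x}$, and $\boldsymbol{1}$ remains in its null space for every feasible $\boldsymbol{x}$.

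\emph{Step 2 (concavity).} By \eref{norm},
\[
\lambda_2(\mathbf L'(\boldsymbol{x}))=\inf\{\boldsymbol{v}^T\mathbf L'(\boldsymbol{x})\boldsymbol{v}:\|\boldsymbol{v}\|_2=1,\ \boldsymbol{1}^T\boldsymbol{v}=0\}.
\]
For each fixed $\boldsymbol{v}$, the map $\boldsymbol{x}\mapsto \boldsymbol{v}^T\mathbf L'(\boldsymbol{x})\boldsymbol{v}$ is affine. A pointwise infimum of affine functions is concave, so the objective is concave. The constraints $\boldsymbol{x}^T\boldsymbol{y}_k\ge \mathcal R_k\gamma_\text{th}$, $\boldsymbol{1}^T\boldsymbol{x}=K_\text{RIS}$ and $0\le\boldsymbol{x}\le 1$ are linear, so \eref{cnovex1} maximizes a concave function over a polyhedron and is convex.

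\emph{Step 3 (SDP form).} Introduce an epigraph variable $s$ and claim
\[
\lambda_2(\mathbf L')\ge s \iff \mathbf L' - s\big(\mathbf I - \tfrac{1}{V}\boldsymbol{1}\boldsymbol{1}^T\big)\succeq 0 .
\]
The proof uses the fact that $\mathbf L'$ and $\mathbf P=\mathbf I-\frac1V\boldsymbol{1}\boldsymbol{1}^T$ are simultaneously diagonalizable in an orthonormal eigenbasis of $\mathbf L'$ containing $\boldsymbol{1}/\sqrt V$. On $\boldsymbol{1}$ the shifted matrix acts as $0$. On $\boldsymbol{1}^\perp$ it has eigenvalues $\lambda_i(\mathbf L')-s$ for $i\ge2$, and these are all nonnegative iff $\lambda_2\ge s$. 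This step relies on Step 1's observation that $\mathbf L'\boldsymbol{1}=\boldsymbol{0}$ for all $\boldsymbol{x}$.

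The equivalent problem is then
\[
\max_{\boldsymbol{x},s}\ s \quad \text{s.t.}\quad \mathbf L+\sum_i x_i\mathbf L_i - s\mathbf P\succeq 0,
\]
together with the same linear constraints. This is an LMI in $(\boldsymbol{x},s)$ with a linear objective, i.e., an SDP solvable by CVX.

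The main obstacle is Step 3's equivalence, specifically checking that the null direction $\boldsymbol{1}$ is common to every $\mathbf L_i$. Otherwise, subtracting $s\mathbf P$ would not isolate $\lambda_2$. I would also note that \eref{Lnew} should be read with $\mathbf A_i^T$ rather than $\mathbf A_i$, which is needed for symmetry.
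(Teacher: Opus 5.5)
Your proof is correct and takes the same route as the paper. The objective $\lambda_2(\mathbf L'(\boldsymbol{x}))$ is concave as a pointwise infimum of affine functions of $\boldsymbol{x}$ via \eref{norm}, the constraints are linear, and the problem becomes the epigraph form $s(\mathbf I-\frac{1}{V}\mathbf 1\mathbf 1^T)\preceq \mathbf L'(\boldsymbol{x})$; the paper cites this last equivalence from \cite{Boyd}, whereas you verify it via the eigenvector $\mathbf 1$ shared by $\mathbf L$ and every $\mathbf L_i$ (and your $\frac{1}{V}$ and $\mathbf A_i^T$ are the correct readings of the paper's $\frac{1}{|\mathbf x|}$ and $\mathbf A_i$).
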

\begin{proof}
Using \eref{norm}, the Fiedler value of $\lambda_2(\mathbf L'(\boldsymbol{x}))$   can be expressed
as $\lambda_2(\mathbf L'(\boldsymbol{x}))= \text{Inf}_{\boldsymbol{v}} \big\{\boldsymbol{v}^T \mathbf L'(\boldsymbol{x})) \boldsymbol{v},~~ \|\boldsymbol{v}\|_2=1, ~~\boldsymbol{1}^T \boldsymbol{v}=0   \big\},$ which is the point-wise
infimum of a family of linear functions of $\boldsymbol{x}$ \cite{4786516}. Hence, $\lambda_2(\mathbf L'(\boldsymbol{x}))$ is
a concave function in $\boldsymbol{x}$. Additionally,  the constraints of \eref{cnovex1}
are linear in $\boldsymbol{x}$. Therefore, the optimization problem in \eref{cnovex1} is
a convex optimization problem, and it is equivalent to the following SDP optimization problem \cite{Boyd}
\begin{align}\label{SDP}
&\max_{\mathbf x, s} ~~~~ s
\\
& {\rm s.~t.\ }  ~~~~~ s(\mathbf I - \frac{1}{|\mathbf x|}\mathbf 1 \mathbf 1^T) \preceq \mathbf L'(\mathbf x),\nonumber \\ & 
~~~~~~~~~~~\boldsymbol{x}^T \boldsymbol{y}_k\geq \mathcal R_k \gamma_\text{th},~~ \forall k \in \{1, \ldots,  K_\text{RIS}\}, \nonumber \\& ~~~~~~~~~~~\boldsymbol{1}^T\boldsymbol{x}=K_\text{RIS}, ~ 0 \leq \boldsymbol{x} \leq 1, \nonumber   
\end{align}
where $\mathbf I \in \mathbf R^{V \times V}$ is the identity matrix and $\mathbf F \preceq \mathbf L$ indicates that $\mathbf L- \mathbf F$ is a positive semi-definite matrix and $s$ is an auxiliary variable.
\end{proof}
The optimization problem in \eref{SDP} can be efficiently solved using standard SDP solvers. We first generate all possible mappings between UAVs and their grid points. Second, we use the standard CVX software solver \eref{SDP} to solve SDP optimization \eref{SDP} and obtain $\mathbf x$. After solving, since the entries of output vector $\boldsymbol{x}$ are continuous, we select the  $K_\text{RIS}$   largest values to be  $1$, and the remaining entries are set to $0$. 


		

       
	

\section{Simulation Results}\label{NR}
We consider a 3D scenario where the locations of UAVs are optimized, while the RIS and UE have fixed locations. However, the positions of the RIS and UE are updated in each simulation iteration. In these
simulations, we employ the 3GPP Urban Micro (UMi) model  at a carrier frequency of $3$ GHz to calculate all large-scale
path loss values, consistent with the related works \cite{8989805}. Additionally, we assume a Nakagami$-f$ shape parameter of
$f_1=f_2=5$ for the cascaded UE-RIS-UAV link and $f=1$ for the direct UE-UAV links, with a spread parameter of unity for all the links. Similar to \cite{saifglobecom_E}, we use $\sqrt{\frac{\beta_0}{(d_m^\text{MR})^2}}$ and $\sqrt{\frac{\beta_0}{(d_{k}^\text{RK})^2}}$ for $\text{UE}_m \rightarrow \text{RIS}$ and $\text{RIS} \rightarrow \text{UAV}_{k}$ links, respectively, where $\beta_0$ denotes the path loss at the reference distance $d_\text{ref}=1$ m  and $d$ is the corresponding distance. The RIS  is located at  an altitude of $120$ m and $N=100$ elements. 
Other simulation parameters are given as follows:  $K_\text{RIS}=2$,
		$P = 30$ dBm, $p=23$ dBm, $\beta_0=10^{-2}$,   the channel bandwidth is 250 KHz,  $J=40$, $\gamma_\text{th}=60$ dB, and  $N_0=-120$ dBm.
    
The performance of the proposed scheme is compared with  (i) the scheme proposed in \cite{saifglobecom_E} that utilizes the whole RIS to create one link only; (ii)) the original scheme without the RIS, inspired by \cite{8292633}; and (iii) the proposed scheme with random UAV locations.

 







\begin{figure}[t!]
    \begin{minipage}[b]{0.48\linewidth}
  \centering 
    \centerline{
%
%
\begin{tikzpicture}

\begin{axis}[%
width=33mm,
height=37mm,
at={(0mm,0mm)},
scale only axis,
xmin=0.1,
xmax=0.9,
xtick={0.1,0.3,  0.5,  0.7,   0.9},
ymin=3,
ymax=8,
yminorticks=true,
yticklabel style = {font=\small,xshift=0.5ex},
xticklabel style={
    /pgf/number format/fixed,
    /pgf/number format/precision=1,
    font=\small,
    yshift=0ex
},
scaled ticks=false,
axis background/.style={fill=white},
xmajorgrids,
ymajorgrids,
legend style={font=\scriptsize, at={(0.01,0.67)}, anchor=south west,legend cell align=left, align=left, draw=white!15!black}
]

\addplot [color=black, dashdotted, line width=1pt, mark=star, mark options={solid, black}]
table[row sep=crcr]{%
0.1    4.8473\\
0.3     5.6358\\
0.5         6.0035\\
0.7           6.2478 \\
0.9             6.4288  \\
};
\addlegendentry{UAV$_1$ - Approx}

\addplot [color=red, dashdotted, line width=1pt, mark=o, mark options={solid, red}]
  table[row sep=crcr]{%
0.1    5.0596\\
0.3     5.6446\\
0.5         6.0030\\
0.7           6.2468 \\
0.9            6.4287 \\
};
\addlegendentry{UAV$_1$ - Exact}

\addplot [color=green, dashdotted, line width=1pt, mark=star, mark options={solid, green}]
  table[row sep=crcr]{%
0.1    5.9729\\
0.3      5.7922\\
0.5         5.5503\\
0.7           5.1807\\
0.9             4.3917\\
};
\addlegendentry{UAV$_2$ - Approx}

\addplot [color=blue, dashdotted, line width=1pt, mark=o, mark options={solid, blue}]
  table[row sep=crcr]{%
0.1    5.9726\\
0.3     5.7921\\
0.5        5.5504\\
0.7           5.1883 \\
0.9            4.5026 \\
};
\addlegendentry{UAV$_2$ - Exact}

\end{axis}
\node[rotate=0,fill=white] (BOC6) at (1.850cm,-.7cm){\small (a) RIS portion for UAV$_1$ ($\alpha_1$)};
\node[rotate=90] at (-5mm,18mm){Rate (Mbps)};

\end{tikzpicture}%
}
    \centerline{}
    \end{minipage}
    \begin{minipage}[b]{0.48\linewidth}
  \centering 
    \centerline{
%
%
\begin{tikzpicture}

\begin{axis}[%
width=33mm,
height=37mm,
at={(0mm,0mm)},
scale only axis,
xmin=100,
xmax=300,
xtick={100, 150, 200, 250, 300},
ymin=45,
ymax=80,
yminorticks=true,
yticklabel style = {font=\small,xshift=0.5ex},
xticklabel style = {font=\small,yshift=0ex},
axis background/.style={fill=white},
xmajorgrids,
ymajorgrids,
legend style={font=\scriptsize, at={(0.0,0.75)}, anchor=south west,legend cell align=left, align=left, draw=white!15!black}
]

\addplot [color=black, dashdotted, line width=1pt, mark=star, mark options={solid, black}]
  table[row sep=crcr]{%
100             64.7696\\
150              67.7696\\
200              68. 1234\\
250              69. 1234\\
300              70. 1234\\
};
\addlegendentry{Proposed}

\addplot [color=red, dashdotted, line width=1pt, mark=o, mark options={solid, red}]
  table[row sep=crcr]{%
100             57.7396\\
150              58.7696\\
200              59. 1234\\
250              60. 1234\\
300              62. 1234\\
};
\addlegendentry{Random UAV Location}

\addplot [color=green, dashdotted, line width=1pt, mark=star, mark options={solid, green}]
  table[row sep=crcr]{%
100             50.7380\\
150              52.7696\\
200              53. 1234\\
250              54. 1234\\
300              56. 1234\\
};
\addlegendentry{Single-link Setup \cite{saifglobecom_E}}

  partition                                                           
\addplot [color=blue, dashdotted, line width=1pt, mark=o, mark options={solid, blue}]
  table[row sep=crcr]{%
100             47.7970\\
150              47.7970\\
200              47.7970\\
250              47.7970\\
300              47.7970\\
};
\addlegendentry{Original \cite{8292633}}

\end{axis}
\node[rotate=0,fill=white] (BOC6) at (1.850cm,-.7cm){\small (b) Number of RIS elements (N)};
\node[rotate=90] at (-5mm,18mm){$\lambda_2(\mathbf L')$};
\end{tikzpicture}%
}
    \centerline{} 
    \end{minipage}
    \vspace{-5mm}
    \caption{ (a) Rate  versus $\alpha_{1}$ and (b) network connectivity versus $N$ for a network with $12$ UAVs, $\gamma_\text{th}=60$ dB, $\gamma^\text{UAV}_{0}=75$ dB, and $\gamma^\text{UE}_{0}=70$ dB.}
    \label{fig3} \vspace{-4mm}
\end{figure}

We present some numerical results to assess the
approximation of removing the non-aligned term  in \eqref{exact}. We plot the exact rate in \eqref{exact}, while including the non-aligned term, and the approximated rate using  \eref{appro1}. For the purpose of
this part, the fixed 3D coordinates in meters for the UE are $(118, 220, 0)$, while  UAV$_1$, UAV$_2$, and the RIS are located at $(160, 140, 200)$, $(170, 14, 200)$, and $(0, 0, 120)$, respectively. In Fig. \ref{fig3}(a), the  presented results are averaged over $10^5$ simulations. For UAV$_1$ and UAV$_2$, the RIS partitions, respectively, are $\alpha_{1}$ and  $\alpha_{2}=1-\alpha_{1}$.  We adopt the  phase shift quantization levels similar to \cite{10507188} with $4$ bits. From the figure, we notice that approximated rates for the UAVs tightly match the exact rates for most values of $\alpha_{1}$. Overall, these results justify our assumption to (i) ignore the impact of non-aligned channels from the other portions of the RIS and (ii) approximate \eref{exact} by \eref{appro1}. The figure shows that as $\alpha_1$ increases, more RIS elements are allocated to support the link to UAV$_1$. As a result, the exact and the approximated rates of  $\text{UE}\xrightarrow{\text{RIS}} \text{UAV}_{1}$ link increase, while both rates of  $\text{UE}\xrightarrow{\text{RIS}} \text{UAV}_{2}$ link decrease.

\begin{figure}[t!]
    \begin{minipage}[b]{0.48\linewidth}
  \centering 
    \centerline{
%
%
\begin{tikzpicture}

\begin{axis}[%
width=33mm,
height=37mm,
at={(0mm,0mm)},
scale only axis,
xmin=4,
xmax=12,
xtick={4, 5, 6, 7, 8, 9, 10, 11, 12},
ymin=12,
ymax=65,
ytick={10,20,30,40,60, 60, 70},
yminorticks=true,
yticklabel style = {font=\small,xshift=0.5ex},
xticklabel style = {font=\small,yshift=0ex},
axis background/.style={fill=white},
xmajorgrids,
ymajorgrids,
legend style={font=\scriptsize, at={ (0.0,0.95) }, anchor=south west,legend cell align=left, align=left, draw=white!15!black}
]

\addplot [color=black, dashdotted, line width=1pt, mark=star, mark options={solid, black}]
  table[row sep=crcr]{%
4    37.9458\\
5    38.8763\\
6     41.0019\\
7       45.6034\\
8         49.1102\\
9           56.8432\\
10           58.7338 \\
11           62.2174\\
12             64.7696\\
};

\addplot [color=red, dashdotted, line width=1pt, mark=o, mark options={solid, red}]
  table[row sep=crcr]{%
4    35.4221\\
5    36.4081\\
6     37.4218\\
7       41.3178\\
8         42.9317 \\
9           50.7529\\
10           51.6750 \\
11           55.1846\\
12             57.7396\\
};

\addplot [color=green, dashdotted, line width=1pt, mark=star, mark options={solid, green}]
  table[row sep=crcr]{%
4    30.2945\\
5    32.8197\\
6     35.6775\\
7       38.3364\\
8         41.0114\\
9           45.4151\\
10           47.4308\\
11          48.9850\\
12             50.7380\\
};

\addplot [color=blue, dashdotted, line width=1pt, mark=o, mark options={solid, blue}]
  table[row sep=crcr]{%
4    14.8058\\
5    23.3869\\
6     25.9932\\
7       30.4880\\
8         35.4937\\
9           40.8344\\
10           42.2960\\
11           45.6418\\
12             47.7970\\
};

\end{axis}
\node[rotate=0,fill=white] (BOC6) at (1.850cm,-.7cm){\small (a) Number of UAVs (K)};
\node[rotate=90] at (-5mm,18mm){$\lambda_2(\mathbf L')$};
\end{tikzpicture}%
}
    \centerline{}
    \end{minipage}
    \begin{minipage}[b]{0.48\linewidth}
  \centering 
    \centerline{
%
%
\begin{tikzpicture}

\begin{axis}[%
width=33mm,
height=37mm,
at={(0mm,0mm)},
scale only axis,
xmin=4,
xmax=12,
xtick={4, 5, 6, 7, 8, 9, 10, 11, 12},
ymin=0,
ymax=12,
ytick={0,2,4,6,8, 10, 12},
yminorticks=true,
yticklabel style = {font=\small,xshift=0.5ex},
xticklabel style = {font=\small,yshift=0ex},
axis background/.style={fill=white},
xmajorgrids,
ymajorgrids,
legend style={font=\scriptsize, at={(0.09,0.55)}, anchor=south west,legend cell align=left, align=left, draw=white!15!black}
]

\addplot [color=black, dashdotted, line width=1pt, mark=star, mark options={solid, black}]
  table[row sep=crcr]{%
4    11.3882\\
5     5.0947\\
6     4.3012\\
7       3.9545\\
8         2.8228\\
9           2.3842\\
10           2.1985 \\
11           1.8975\\
12             2.0415\\
};
\addlegendentry{Proposed}

\addplot [color=red, dashdotted, line width=1pt, mark=o, mark options={solid, red}]
  table[row sep=crcr]{%
4    9.1216\\
5    3.9839\\
6     2.1898\\
7       1.8583\\
8         1.7705\\
9           1.3504\\
10           1.1763 \\
11           0.8766\\
12             1.0255\\
};
\addlegendentry{Random UAV Location}

\addplot [color=green, dashdotted, line width=1pt, mark=star, mark options={solid, green}]
  table[row sep=crcr]{%
4    7.7846\\
5     3.1899\\
6      1.7714\\
7       1.8211\\
8         0.9953\\
9           0.9094\\
10           0.6629\\
11          0.5416\\
12             0.8322\\
};
\addlegendentry{Single-link Setup \cite{saifglobecom_E}}

\addplot [color=blue, dashdotted, line width=1pt, mark=o, mark options={solid, blue}]
  table[row sep=crcr]{%
4    0.0000\\
5    0.0000\\
6     0.0000\\
7       0.3297\\
8         0.2079\\
9           0.3897\\
10           0.3479\\
11           0.3479\\
12             0.3479\\
};
\addlegendentry{Original \cite{8292633}}

\end{axis}
\node[rotate=0,fill=white] (BOC6) at (1.850cm,-.7cm){\small (b) Number of UAVs (K)};
\node[rotate=90] at (-5mm,18mm){$\lambda_2(\mathbf L')$};
\end{tikzpicture}%
}
    \centerline{} 
    \end{minipage}
    \vspace{-5mm}
    \caption{Network connectivity versus $K$ for (a) well-connected network with   $\gamma^\text{UAV}_{0}=75$ dB  and $\gamma^\text{UE}_{0}=70$ dB and (b) sparse network with    $\gamma^\text{UAV}_{0}=\gamma^\text{UE}_{0}=83$ dB.}
    \label{fig4} \vspace{-4mm}
\end{figure}

Fig. \ref{fig3}(b) shows the network connectivity versus the number of elements $N$ for $K=12$. We can see that the increase in the number of RIS elements improves connectivity, since  RISs with more elements can boost up the quality of the newly established $\text{UE} \rightarrow \text{UAV}_{k}$  links. As a result, the network connectivity is increased. Original scheme is fixed.

Fig. \ref{fig4} shows the network connectivity versus the number of UAVs $K$ for two scenarios: (a) well-connected network and (b) sparse network, with $40$ grid points for the UAVs $K_\text{RIS}$. In the well-connected scenario, we assume high connectivity among the UAVs and between the UE and the UAVs, meaning direct links between the UE and the UAV(s) are always present.  In this case, the original network forms one connected graph. As shown, the proposed scheme consistently outperforms other schemes due to the joint optimization, which enables (i) the RIS to establish two potential links and (ii) optimal UAV positioning that improves the quality of these links. In contrast, the single-link setup shows noticeable degradation compared to the proposed scheme with random UAV placement.

For the sparse network, we set $\gamma^\text{UAV}_{0}=\gamma^\text{UE}_{0}=83$ dB. In this case, the original network is not always fully connected, occasionally resulting in zero connectivity. Therefore, the performance of the original scheme fluctuates between zero and positive values, but the presented results are averaged over a large number of Monte Carlo simulations. Since the network is sparse due to the high SINR thresholds for  the $\text{UE} \rightarrow \text{UAV}$ and $\text{UAV}_k \rightarrow \text{UAV}_{k'}$ links, adding more UAVs does not lead to increased connectivity as observed in Fig.~\ref{fig4}(b).   This is because deploying more UAVs, while many of their connections fail to meet the high SINR thresholds, results in a sparser network graph (i.e., more nodes but fewer edges). Consequently, the network connectivity of the RIS-assisted schemes decreases, in contrast to Fig.~\ref{fig4}(a), where the network graph is highly connected. 

Fig.~\ref{fig8} illustrates the impact of imperfect CSI coefficients $\sigma^2_e$ on network connectivity for $K=10$. We consider the imperfect CSI model in \cite{10507188}.  We assume $\sigma^2_{e, \text{MK}}=\sigma^2_{e, \text{MRK}}=\sigma^2_{e}$, and present the network connectivity for both perfect CSI ($\sigma^2_{e}=0$) and imperfect CSI as a function of $\sigma^2_{e}$. As shown in the figure, imperfect CSI causes a gradual performance degradation compared to the perfect CSI case. However, the impact remains relatively limited until $\sigma_e^2 > 10^0$, beyond which a more noticeable degradation is observed. This is because CSI errors primarily affect the SINRs of the RIS-assisted links, while the overall network connectivity also depends on the underlying graph topology, making it relatively robust to moderate CSI imperfections.

\begin{figure}[!t]
  \centering  
\centerline{
%
%
\begin{tikzpicture}

\begin{axis}[%
width=65mm,
height=45mm,
at={(0mm,0mm)},
scale only axis,
xmode=log,
xmin=0.0001,
xmax=1,
ymin=37,
ymax=40,
yminorticks=true,
yticklabel style = {font=\small,xshift=0.5ex},
xticklabel style = {font=\small,yshift=0ex},
axis background/.style={fill=white},
xmajorgrids,
ymajorgrids,
legend style={font=\scriptsize, at={(0.0,0.35)}, anchor=south west,legend cell align=left, align=left, draw=white!15!black}
]

\addplot [color=black, dashdotted, line width=1pt, mark=star, mark options={solid, black}]
  table[row sep=crcr]{%
 0.0001     39.5243    \\
0.001        39.5243   \\
0.01       39.5243    \\
0.1       39.5243    \\
1       39.5243    \\
};
\addlegendentry{Proposed - Perfect CSI}

\addplot [color=red, dashdotted, line width=1pt, mark=star, mark options={solid, red}]
  table[row sep=crcr]{%
0.0001     39.5243    \\
0.001        39.4243   \\
0.01       39.3243    \\
0.1       39.1243    \\
1       39.0243    \\
};
\addlegendentry{Proposed - Imperfect CSI}

\end{axis}
\node[rotate=0,fill=white] (BOC6) at (3.35cm,-.78cm){\small   $\sigma^2_e$};
\node[rotate=90] at (-8mm, 22mm){$\lambda_2(\mathbf L')$};
\end{tikzpicture}%
}
\caption{Network connectivity versus $\sigma^2_e$ for $K=10$, $\gamma^\text{UAV}_{0}=\gamma^\text{UE}_{0}=83$ dB.}
\label{fig8}
\end{figure}
 

\section{conclusion}\label{C}
This work  formulates and solves the network connectivity maximization problem,  characterized by the Fiedler value, through a joint optimization of RIS-aided link selection, RIS  partitioning, and UAV positioning. With the assistance of RIS deployment and UAV positioning, we aim to enhance the network connectivity of uplink RIS-assisted UAV networks. An iterative solution is developed and benchmarked against several schemes, including the original method, random UAV placements, and a single-link setup. Concerning future work, the employment of scheduling multiple transmitting UEs to a single RIS, rather than assigning one transmitting UE per RIS, may be taken into account.

\bibliographystyle{IEEEtran}
\bibliography{main}

@INPROCEEDINGS{10437618,
  author={Javad-Kalbasi, Mohammad and Al-Abiad, Mohammed S. and Valaee, Shahrokh},
  booktitle={GLOBECOM 2023 - 2023 IEEE Global Communications Conference}, 
  title={Energy Efficient Communications in {RIS}-Assisted {UAV} Networks Based on Genetic Algorithm}, 
  year={2023},
  volume={},
  number={},
  pages={5901-5906},
  doi={10.1109/GLOBECOM54140.2023.10437618}}

@ARTICLE{9756313,
  author={Obeed, Mohanad and Chaaban, Anas},
  journal={IEEE Transactions on Wireless Communications}, 
  title={Joint Beamforming Design for Multiuser {MISO} Downlink Aided by a Reconfigurable Intelligent Surface and a Relay}, 
  year={2022},
  volume={21},
  number={10},
  pages={8216-8229},
  doi={10.1109/TWC.2022.3164903}}

@INPROCEEDINGS{10104574,
  author={Weinberger, Kevin and Reifert, Robert-Jeron and Sezgin, Aydin and Basar, Ertugrul},
  booktitle={WSA \& SCC 2023; 26th International ITG Workshop on Smart Antennas and 13th Conference on Systems, Communications, and Coding}, 
  title={{RIS}-enhanced Resilience in Cell-Free {MIMO}}, 
  year={2023},
  volume={},
  number={},
  pages={1-6},
  doi={}}

@ARTICLE{9716042,
  author={Hassan, Md. Zoheb and Kaddoum, Georges and Akhrif, Ouassima},
  journal={IEEE Internet of Things Journal}, 
  title={Interference Management in Cellular-Connected Internet of Drones Networks With Drone-Pairing and Uplink Rate-Splitting Multiple Access}, 
  year={2022},
  volume={9},
  number={17},
  pages={16060-16079},
  doi={10.1109/JIOT.2022.3152382}}

@INPROCEEDINGS{8292633,
  author={Abdel-Malek, Mai A. and Ibrahim, Ahmed S. and Mokhtar, Mohamed},
  booktitle={2017 IEEE 28th Annual International Symposium on Personal, Indoor, and Mobile Radio Communications (PIMRC)}, 
  title={Optimum {UAV} positioning for better coverage-connectivity tradeoff}, 
  year={2017},
  volume={},
  number={},
  pages={1-5},
  doi={10.1109/PIMRC.2017.8292633}}

@ARTICLE{4786516,
  author={Ibrahim, Ahmed S. and Seddik, Karim G. and Liu, K.J. Ray},
  journal={IEEE Transactions on Wireless Communications}, 
  title={Connectivity-aware network maintenance and repair via relays deployment}, 
  year={2009},
  volume={8},
  number={1},
  pages={356-366},
  doi={10.1109/T-WC.2009.080045}}

@ARTICLE{4657335,
  author={Pandana, Charles and Liu, K.J. Ray},
  journal={IEEE Transactions on Wireless Communications}, 
  title={Robust connectivity-aware energy-efficient routing for wireless sensor networks}, 
  year={2008},
  volume={7},
  number={10},
  pages={3904-3916},
  doi={10.1109/T-WC.2008.070453}}

@INPROCEEDINGS{new,
  author={M. Fiedler},
  booktitle={Czechoslovak Mathematical J.}, 
  title={Algebraic connectivity of graphs}, 
  year={1973},
  volume={23},
  number={},
  pages={298-305}}

@ARTICLE{10458024,
  author={Naim Shaikh, Mohd Hamza and Celik, Abdulkadir and Eltawil, Ahmed M. and Nauryzbayev, Galymzhan},
  journal={IEEE Transactions on Wireless Communications}, 
  title={Grant-Free {NOMA} Through Optimal Partitioning and Cluster Assignment in STAR-RIS Networks}, 
  year={2024},
  volume={23},
  number={8},
  pages={10166-10181},
  doi={10.1109/TWC.2024.3369186}}

@ARTICLE{saifglobecom_E,
  author={Saif, Mohammed and Javad-Kalbasi, Mohammad and Valaee, Shahrokh},
  journal={IEEE Transactions on Wireless Communications}, 
  title={Effectiveness of Reconfigurable Intelligent Surfaces to Enhance Connectivity in {UAV} Networks}, 
  year={2024},
  volume={23},
  number={12},
  pages={18757-18773},
  doi={10.1109/TWC.2024.3476422}}

@ARTICLE{10507188,
  author={Arzykulov, Sultangali and Celik, Abdulkadir and Nauryzbayev, Galymzhan and Eltawil, Ahmed M.},
  journal={IEEE Transactions on Cognitive Communications and Networking}, 
  title={Aerial {RIS}-Aided Physical Layer Security: {O}ptimal Deployment and Partitioning}, 
  year={2024},
  volume={},
  number={},
  pages={1-1},
  doi={10.1109/TCCN.2024.3392798}}

@INPROCEEDINGS{Boyd,
  author={S. Boyd},
  booktitle={Proc.
International Congress of Mathematicians,}, 
  title={Convex optimization of graph laplacian eigenvalues}, 
  year={2006},
  volume={3},
  number={},
  pages={1311-1319}}

@ARTICLE{8989805,
  author={Lyu, Jiangbin and Zhang, Rui},
  journal={IEEE Wireless Communications Letters}, 
  title={Spatial Throughput Characterization for Intelligent Reflecting Surface Aided Multiuser System}, 
  year={2020},
  volume={9},
  number={6},
  pages={834-838},
  doi={10.1109/LWC.2020.2972527}}

\end{document}